\documentclass{article}
\pagestyle{plain}

\usepackage{array, paralist, enumerate, amsmath, amsfonts, amssymb, amscd, color, mathrsfs,comment}
\usepackage{amsthm} 

\usepackage{geometry}
\usepackage{framed}
\usepackage{hyperref}
\usepackage{graphicx}
\usepackage{epstopdf}
\usepackage[all,cmtip]{xy}

\usepackage{setspace}
\usepackage{etoolbox}
\AtBeginEnvironment{quote}{\singlespace\vspace{-\topsep}\small}
\AtEndEnvironment{quote}{\vspace{-\topsep}\endsinglespace}

\numberwithin{equation}{section}
\newtheorem{thm}[equation]{Theorem}

\newtheorem{prop}[equation]{Proposition}

\theoremstyle{definition}

\newtheorem{remark}[equation]{Remark}
\newtheorem{protocol}[equation]{Protocol}

\usepackage{tcolorbox}

\title{Folding Custom Gates with Verifier Input}

\author{Aard Vark\footnote{\url{tideofwords@gmail.com}} \and Yan X Zhang\footnote{\url{yan.x.zhang@sjsu.edu}, Department of Mathematics and Statistics, San Jos\`{e} State University}}

\date{January 20, 2024}

\begin{document}

\maketitle

\begin{abstract}
    In the context of interactive proofs, a \emph{folding scheme}, popularized by Nova, is a way to combine multiple instances of a constraint system into a single instance, so the validity of the multiple instances can statistically be reduced to the validity of a single one. We show how Nova folding can be generalized to ``custom'' gates and extra rounds of verifier randomness. As an application of this extension, we present \emph{Origami}, the first (to our knowledge) known example of a folding scheme for lookups. 
\end{abstract}

\section{Introduction}

In recent years, the field of interactive proofs (IPs) and zero-knowledge (ZK) have been very active thanks to new applications arising from blockchain and cryptocurrencies. One of the more exciting new developments is the concept of \emph{folding}. A \emph{folding scheme} is a protocol to aggregate multiple satisfying instances of an \emph{arithmetization} (some constraint system, such as a set of equations) into a single instance. Usually, the constraint systems are encoding some computation done by a prover to be checked by a verifier. Therefore, a folding scheme then allows  multiple computations to be combined and just verified once, which can be a significant saving for many implementations of proving systems.

Nova \cite{nova} introduced folding and a folding scheme for R1CS circuits (a particular arithmetization). Under the hood, Nova-style folding works by taking a random linear combination of the instances; the R1CS equations need to be generalized to \emph{relaxed R1CS} constraints, which are stable under linear combination.

Our goals in this work are:
\begin{enumerate}
    \item Explicitly show how Nova-style folding can be generalized to any polynomial ``custom gate''. The idea is already sketched in Sangria \cite[Section 3.3]{sangria}; we are just filling in details.
    \item Generalize how the protocol works to when we have extra rounds of verifier randomness in the protocol. We call this \emph{custom gates with verifier input.}
    \item Show that folding Halo2-style lookups can be seen as a special case of custom gates with verifier input. We explicitly write out the protocol (\textbf{Origami}). We did not observe any previous examples of folding schemes for lookups in literature.
\end{enumerate}

This paper is an expanded version of a note posted on HackMD earlier in April 2023 \cite{hackmd}. Similar ideas were developed independently in Protostar \cite{protostar}, under the name of ``accumulation schemes for special-sound protocols.''

\section{Preliminaries}

Generally, we use capital letters as vectors, and the matching lower case letters refer to their coordinates. For example, a vector $S$ would have its entries labeled $(s_1, \ldots, s_n)$.

\subsection{Arithmetization and PAIRs}
\label{sec:PAIR}

In interactive proofs, an \emph{arithmetization} is an embedding of a computation (such as $3 + 4 = 7$) into some algebraic constraint system (such as a polynomial or a linear equation). A \emph{preprocessed algebraic intermediate representation (PAIR)} (as in \cite{AIR-to-RAPs}) is a particular arithmetization of the following form:
\begin{enumerate}
    \item $P$ (the \emph{structure}) is a collection of polynomials $f_1, \ldots, f_\ell$ in $2w$ variables  over a finite field $\mathbb{F}$.
    \item   An \emph{execution trace} $T$ for $P$ is an array of elements $T_{r, c}$ of $\mathbb{F}$, with $n$ rows labeled $0$ to $n-1$, each of width $w$. 
    \item In order for the trace $T$ to be valid, it is required that, when we substitute into any $f_i$ the $2w$ elements of any two consecutive rows of $T$, the result equals zero:
$$ f_i(T_{j, 1}, T_{j, 2}, \ldots, T_{j, w}, T_{j+1, 1}, T_{j+1, 2}, \ldots, T_{j+1, w}) = 0, $$ 
where $j+1$ wraps to $0$ if $j = n-1$.
    \item The first $t \leq w$ of the $w$ columns of $T$ will be predefined and publicly known.  These $k$ columns are known as the PAIR \emph{instance}.
    \item The remaining $w-k$ columns of $T$ are known as the \emph{witness}.
\end{enumerate}
    
From a theoretical viewpoint, the requirement of two consecutive rows of the last item is unnecessarily restrictive; the main idea is that there is some set of elements $S_j$ ``corresponding'' to each row and that $S_{j+1}$ is a vertical shift of $S_j$. In practice, $T$ is a recording of computation, where each two consecutive rows are embedding one particular computation. 

As an example of a PAIR (slightly modified) from \cite{AIR-to-RAPs}, see Table~\ref{tab:example_pair}. In this example, we have $n=4, t=1, w=2$. $P$ has a single polynomial $f_1,$ which we can write in shorthand as
\[
f_1 = C_1 (X_1[1] - (X_1 + X_2)) + (1-C_1) (X_1[1] - X_1*X_2).
\]
Explicitly, this means for all $j \in \{0, \ldots, 3\}$, we have
\[
f_{1, j} = c_{1, j} (x_{1, j+1} - (x_{1, j} + x_{2, j}) + (1 - c_{1, j}) (x_{1, j+1} - x_{1, j} * x_{2, j}) = 0,
\]
where:
\begin{itemize}
    \item the columns correspond to the vectors $C_1, X_1, X_2$ respectively; 
    \item recall that the lower case variables are the parts of the capital letter vectors. For example, $X_1 = (x_{1, 1}, \ldots, x_{1, 4})$;
    \item $X_1[1]$ means that the variable corresponds to the $X_1$ variable for the next row. 
\end{itemize}
Here, the first ($t=1$) preprocessed column is acting as a ``selector column,'' meaning it selects whether the operation is addition (corresponding to $c_{1, j} = 1$) or multiplication (corresponding to $c_{1, j} = 0$). The information of this column and $P = \{f_1\}$ together encode a circuit that is computing $(1 + 1 + 5)*3 = 21$ step by step, starting with adding the first $2$ inputs ($x_{1,1} = 1$ and $x_{2,1} = 1$) to obtain the first intermediate step $2$. The ``witness to the computation'' is then the content of the second through fourth ($w=3$) columns, and the combined table is the execution trace of the computation.

\begin{table}[]
    \centering
    \begin{tabular}{|c|c|c|}
        \hline
        1 & 1 & 1  \\
        \hline
        1 & 2 & 5 \\
        \hline
        0 & 7 & 3 \\
        \hline
        0 & 21 & 0 \\
        \hline
    \end{tabular}
    \caption{An example of a PAIR that computes $(1 + 1 + 5)*3 = 21$.}
    \label{tab:example_pair}
\end{table}

\subsection{Commitments}

We assume the basic definition of a \emph{commitment scheme} as in \cite{BCC} (where commitment schemes are referred to as ``blobs''). We assume that we have an \emph{additively homomorphic} commitment scheme $\mathrm{Com}(pp, x, \rho)$, meaning that $$\mathrm{Com}(pp, x, \rho) + \mathrm{Com}(pp, y, \rho) = \mathrm{Com}(pp, x+y, \rho)$$. We denote a commitment to a vector $V$ by $\overline{V} = \mathrm{Com}(pp, V, \rho)$. 

For clarity of notation, sometimes we will write $\mathrm{Com}(V) := \mathrm{Com}(pp, V, \rho)$ when the arguments are obvious from context. 

\subsection{Folding}

\emph{Folding schemes}, introduced in \cite{nova}, reduce the task of checking two instances in a relation $\mathcal{R}$ to the task of checking a single instance $\mathcal{R}$.  A formal definition is given in \cite[Section 3]{nova}.  

Given two different instance-witness pairs to the same structure, a folding scheme allows one to combine them into a single instance-witness pair.  This reduces the task of checking two traces to the task of checking a single trace.

In both \cite{nova} and the present work, folding is achieved by taking a random linear combination of the elements $T_{r, c}$ of the execution trace.
The constraint polynomials need to be \emph{relaxed} so as to be compatible with linear combinations.  A detailed construction is given in Section \ref{sec:relaxed}.

\subsection{Homogenization}

We start with a general algebraic identity.

\begin{prop}
\label{prop:delta}
Let $p(\mathbf{x})$ be a homogenous polynomial of degree $d$ in $n$ variables (where $\mathbf{x} = (x_1, x_2 \ldots, x_n)$). Then there are $d-1$ polynomials $\Delta^1 p, \Delta^2 p, \ldots, \Delta^{d-1} p$ of degree at most $d$ in $2n$ variables such that
$$ p(\mathbf{x} + r \mathbf{y}) = p(\mathbf{x}) + r^d p(\mathbf{y}) + \sum_{k=1}^{d-1} r^k \Delta^k p (\mathbf{x}, \mathbf{y}). $$
\end{prop}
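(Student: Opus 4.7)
The plan is to obtain the decomposition by simply expanding $p(\mathbf{x} + r\mathbf{y})$ as a polynomial in $r$ and reading off the coefficients. Concretely, I would define
\[
\Delta^k p(\mathbf{x}, \mathbf{y}) := [r^k]\, p(\mathbf{x} + r\mathbf{y}) \quad \text{for } k = 1, \ldots, d-1,
\]
and then verify that this gives the claimed identity with all the asserted degree bounds.

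First I would reduce to the case of a single monomial by linearity: if the identity holds for each monomial summand of $p$, it holds for $p$ itself, with the $\Delta^k$ polynomials adding accordingly. For a monomial $\mathbf{x}^\alpha$ with $|\alpha| = d$, the substitution gives
\[
\prod_i (x_i + r y_i)^{\alpha_i},
\]
which, when expanded, is manifestly a polynomial in $r$ of degree exactly $d$ whose coefficients are polynomials in $(\mathbf{x}, \mathbf{y})$. Grouping by powers of $r$ yields the coefficients $\Delta^k p$.

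Next I would check the two endpoint coefficients. Setting $r = 0$ in $p(\mathbf{x} + r\mathbf{y})$ gives $p(\mathbf{x})$, so the constant term is $p(\mathbf{x})$. For the top coefficient, the only way to produce $r^d$ from the expansion of $\prod_i (x_i + r y_i)^{\alpha_i}$ is to pick $r y_i$ from every factor, giving $r^d \prod_i y_i^{\alpha_i}$; summing over monomials gives $r^d p(\mathbf{y})$, using that $p$ is homogeneous of degree $d$. Thus the top coefficient is $p(\mathbf{y})$, and the middle coefficients $\Delta^1 p, \ldots, \Delta^{d-1} p$ account for everything else.

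Finally, I would check the degree bound: in the monomial expansion above, each term contributing to the $r^k$ coefficient is a product of $d-k$ factors of $x_i$ and $k$ factors of $y_i$, hence has total degree $d$ in the $2n$ variables $(\mathbf{x}, \mathbf{y})$. Summing over monomials of $p$ preserves this bound, so each $\Delta^k p$ has degree at most $d$. There is essentially no obstacle here — the proposition is a direct consequence of polynomial expansion; the only thing worth being careful about is distinguishing the degree in $r$ (which is exactly $d$ by homogeneity) from the degree in $(\mathbf{x}, \mathbf{y})$ of each coefficient (which is at most $d$).
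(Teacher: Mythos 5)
Your proposal is correct and takes essentially the same route as the paper: defining $\Delta^k p$ as the coefficient of $r^k$ in $p(\mathbf{x} + r\mathbf{y})$ is exactly what the paper does, since its explicit construction (the sum of the $\binom{d}{k}$ terms obtained by swapping $k$ of the $d$ variable factors of each monomial for the corresponding $y$'s) is precisely that coefficient. Your write-up is a bit more careful about the endpoint coefficients and degree bound, but there is no substantive difference.
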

\begin{proof}
Write $p(\mathbf{x})$ as a linear combination of monomials $x_{i_1} x_{i_2} \cdots x_{i_d}$:
$$p(\mathbf{x}) = \sum a_{i_1, i_2, \ldots, i_d} x_{i_1} x_{i_2} \cdots x_{i_d}.$$
 (It's OK if some of the indices are equal to each other.  For example, the monomial $x_1^2$ has $i_1 = i_2 = 1$. Also, this expression is not unique, and that's OK too.  For example, $x_1 x_2$ could be rewritten as $x_2 x_1$ or even $2x_1 x_2 - x_2 x_1$.)

The polynomial $\Delta^k p(\mathbf{x})$ can be computed as follows: for any monomial, $\Delta^k(x_{i_1} x_{i_2} \cdots x_{i_d})$ is the sum of the $\binom{d}{k}$ terms obtained by replacing $k$ of the $d$ variables $x_{i_j}$ with $y_{i_j}$.  For example:
\begin{align}
 \Delta^1 (x_1 x_2) & = x_1 y_2 + y_1 x_2 \\
 \Delta^2 (x_1 x_2 x_3) & = x_1 y_2 y_3 + y_1 x_2 y_3 + y_1 y_2 x_3 \\
 \Delta^1 (x_1^3) & = \Delta^1 (x_1 x_1 x_1) = x_1 y_1 y_1 + y_1 x_1 y_1 + y_1 y_1 x_1. \qedhere
\end{align}
\end{proof}

What if $p$ is not homogenous? To match the notation in Nova, we'll homogenize by introducing an extra variable: for any polynomial $p(\mathbf{x})$ of degree at most $d$ in $n$ variables, we'll define a homogenous polynomial $p^{homog}(\mathbf{x}, u)$ of degree $d$ in $n+1$ variables, as follows. Write $$p(\mathbf{x}) = \sum_{d' = 0}^{d} p_{d'}(\mathbf{x}), $$ where $p_{d'}(\mathbf{x})$ is homogenous of degree $d'$. In other words, $p_{d'}(\mathbf{x})$ is just the sum of all the degree-$d'$ terms in $p(\mathbf{x})$. Then define $$p^{homog}(\mathbf{x}, u) = \sum_{d' = 0}^{d} u^{d - d'} p_{d'}(\mathbf{x}).$$

In other words, to make $p^{homog}(\mathbf{x}, u)$ from $p(\mathbf{x})$, just multiply each term of $p(\mathbf{x})$ by whatever power of $u$ is needed to make the degree come out to $d$.

For future reference, it's worth noting that 
$$p(\mathbf{x}) = p^{homog}(\mathbf{x}, 1).$$

\section{The Protocol}

We describe a folding scheme for AIRs. The folding scheme we're about to describe works for PAIRs and more general polynomial constraint systems as well.  All that matters is that a valid trace be defined by some fixed collection of polynomial equations in the entries.  It is not necessary that each polynomial only involve entries from two consecutive rows.  Nonetheless, we'll continue working within the AIR framework.

To start, we introduce the main object, a \emph{relaxed AIR}, in the style of \cite{sangria}. The main idea of a relaxed AIR, like the relaxed R1CS analogue from Nova, is a polynomial constraint that involves a ``slack term.''

\subsection{Relaxed AIR Instances}
\label{sec:relaxed}

Recalling the setting of Section~\ref{sec:PAIR}, we write $f_i(T)$ for the length-$n$ vector whose $j$-th entry is $f_i$ applied to rows $j$ and $j+1$ of $T$:
$$ f_i(T_{j, 1}, T_{j, 2}, \ldots, T_{j, w}, T_{j+1, 1}, T_{j+1, 2}, \ldots, T_{j+1, w}). $$
(recall that rows wrap around, so row $n$ is the same as row $0$.)

We define a \emph{relaxed AIR instance} $$(P = \{f_i\}_{i \in \{1, \ldots, l\}}, \{E_i \in \mathbb{F}^n\}_{i \in \{1, \ldots, \ell}, u \in \mathbb{F})$$ to be satisfied by a trace $T$ (as before, a $n$-by-$w$ array of elements of $\mathbb{F}$) if
$$ f_i^{homog}(T_{j, 1}, T_{j, 2}, \ldots, T_{j, w}, T_{j+1, 1}, T_{j+1, 2}, \ldots, T_{j+1, w}, u) = (E_i)_j, $$ for each $i$ and $j$. We call the $E_i$ the \emph{slack vectors} and $T$ the \emph{witness} for the instance.

We'll write $f_i^{homog}(T, u)$ for the vector whose $j$-th entry is
$$ f_i^{homog}(T_{j, 1}, T_{j, 2}, \ldots, T_{j, w}, T_{j+1, 1}, T_{j+1, 2}, \ldots, T_{j+1, w}, u). $$

Since $f_i^{homog}$ is a polynomial in $2w+1$ variables, Proposition~\ref{prop:delta} defines polynomials $\Delta^k f_i^{homog}$ in $2(2w+1)$ variables.  We'll write $\Delta^k f_i^{homog} (T^1, u^1; T^2, u^2)$ for the vector whose $j$-th entry is
$$ \Delta^k f_i^{homog} (T^1_{j, 1}, \ldots, T^1_{j+1, w}, u^1, T^2_{j, 1}, \ldots, T^2_{j+1, w}, u^2). $$

Any AIR instance can be ``promoted'' to a relaxed AIR instance by setting $u = 1$ and $E = 0$. The main idea is that by converting AIR instances to relaxed AIR instances, the computations can be more easily combined or ``folded.''

\subsection{Committed relaxed AIR}

A \emph{committed relaxed AIR instance-witness pair} $(\overline{T}, u, \overline{E}, \rho)$ for a structure $P = \{f_1, \ldots, f_{\ell}\}$ consists of:
\begin{itemize}
\item a commitment $\overline{T}$ to an $n$-by-$w$ matrix of scalars,
    \item a scalar $u$,
    \item a commitment to \emph{slack vectors} $E = (E_1, \ldots, E_{\ell})$, and
\item randomness $\rho$ (to be used by the commitment scheme).
\end{itemize} 

A \emph{witness} to the committed relaxed AIR instance $(f_i, \overline{E_i}, \overline{T})$ is a tuple $(E, T)$ of 
\begin{itemize}
\item a slack vector $E$ and 
\item an $n$-by-$w$ matrix $T$,
\end{itemize}
such that $\overline{T} = \operatorname{Com}(pp, T, \rho)$, $\overline{E} = \operatorname{Com}(pp, E, \rho)$, and $f_i^{homog}(T, u) = E_i$.

\subsection{Single Fold for Custom Gates}
\label{sec:single-fold}

We'll describe a folding scheme for relaxed AIR instances.  Suppose a prover and a verifier have an AIR $P$, i.e.\ they both know the collection of polynomials $f_i$.  The prover is given two relaxed AIR instances $(T^1, E^1, u^1)$ and $(T^2, E^2, u^2)$. The verifier knows only the scalars $u^1$ and $u^2$.

Let $d_i$ be the degree of the polynomial $f_i$, and let $\Delta^k f_i^{homog}$ be as defined above.

The prover $\mathcal{P}$ and the verifier $\mathcal{V}$ carry out the following protocol.

\begin{tcolorbox}
\begin{protocol} [Custom Gate, Single Fold]
\label{protocol:single-fold}
\hfill
\medskip

INPUT [to $\mathcal{P}$]: 2 relaxed instance-witness pairs $I^{(i)} = (u^{(i)}, T^{(i)}, E^{(i)})$, $i \in \{1, 2\}$.

\medskip

OUTPUTS: [from  $\mathcal{P}$]: 1 relaxed instance-witness pair $I = (u, T, E)$; [from $\mathcal{V}$]: 1 relaxed committed instance $\overline{I} = (u, \overline{T}, \overline{E})$.

\begin{enumerate}
    \item $\mathcal{P}$ computes commitments $\overline{X}$ for $X \in \{T^{(1)}, T^{(2)}, E^{(1)}, E^{(2)}\}$ and their openings $\rho_X$ and sends them to $\mathcal{V}$. (explicitly: for all $X$, $\overline{X} = \operatorname{Com}(pp, X, \rho_X)$ holds.)
    
    \item For each constraint polynomial $f_i$, and each degree $1 \leq k \leq d_i - 1$, $\mathcal{P}$ also computes the \emph{cross term}
$$ B_{i, k} = \Delta^k f_i^{homog} (T^1, u^1; T^2, u^2) $$ (a vector of length $n$). $\mathcal{P}$ computes commitments and openings for $\overline{B_{i, k}}$ and sends them to $\mathcal{V}$.

    \item $\mathcal{V}$ samples a random \emph{folding challenge} $r \in \mathbb{F}$, and sends $r$ to $\mathcal{P}$.

    \item Both $\mathcal{P}$ and $\mathcal{V}$ compute $u = u^1 + r u^2.$ $\mathcal{P}$ computes $T = T^1 + r T^2$ and, for each $i$, $$E_i = E^1_i + r^{d_i} E^2_i + \sum_{k=1}^{d_i-1} r^k B_{i, k}.$$ $\mathcal{P}$ returns the ``folded'' relaxed AIR instance-witness $(u, T, E)$. 
    
    \item $\mathcal{V}$ computes $\overline{T} = \overline{T^1} + r \overline{T^2},$ and, for each $i$, $$\overline{E_i^1} + r^{d_i} \overline{E_i^2} + \sum_{k=1}^{d_i-1} r^k \overline{B_{i, k}}.$$ $\mathcal{V}$ returns the folded relaxed committed instance $(u, \overline{T}, \overline{E})$.
\end{enumerate}

\end{protocol}
\end{tcolorbox}

Like Nova, this folding process can be iterated.  Steps 2-5 allow two committed relaxed AIR instances to be folded into one, at the cost of sending just $4 + \sum_{i=1}^{\ell} d_i-1$ commitments to the verifier.  One can iterate the procedure to fold an arbitrary number of committed relaxed AIR instances together, one by one.  At the end, the prover only has to convince the verifier that the final $(E, T)$ is a witness to the folded instance. Explicitly:

\begin{tcolorbox}
    \begin{protocol}[Custom Gates, Full Protocol]
\hfill
\medskip

INPUT: [to $\mathcal{P}$]: $N$ instance-witness pairs / traces $\widetilde{I^{(i)}} = (T^{(i)})$, for $i \in \{1, \ldots, N\}$. 
\hfill
\medskip
OUTPUTS: [from $\mathcal{P}$]: 1 relaxed instance-witness pair $I^{cml} = (u, T, E)$;  [from $\mathcal{V}$]: 1 folded relaxed committed instance $\overline{I^{cml}} = (u, \overline{T}, \overline{E})$.

\begin{enumerate}
    \item To initialize folding, $\mathcal{P}$ initializes a ``cumulative lookup instance'' $I^{cml}$ where $u, T, E$ are all equal to zero. Similarly, $\mathcal{V}$ initializes a cumulative committed instance $\overline{I^{cml}}$ where $u,  \overline{T}, \overline{E}$ are all equal to zero.
    \item When we fold in a new lookup instance $\widetilde{I^{(i)}} = (T^{(i)})$, $\mathcal{P}$ constructs a relaxed lookup instance $I^{(i)} = (u^{(i)}, T^{(i)},  E^{(i)})$ by setting $u^{(i)} = 1$ and $E_j^{(i)} = (0, \ldots, 0)$ for all $j$.
    \item $\mathcal{P}$ and $\mathcal{V}$ run Protocol~\ref{protocol:single-fold}.
    \begin{itemize}
        \item $\mathcal{P}$ overwrites
$$ I^{cml} \leftarrow \texttt{SingleFold}_{\mathcal{P}}(I^{cml}, I^{(i)}),$$
that is: apply Protocol~\ref{protocol:single-fold}, using the current $I^{cml}$ as the first input and the new relaxed instance $I^{(i)}$ as the second input. During each step, we need a new $r = r^{(i)}$ for our ``folding randomness.'' 
\item $\mathcal{V}$ builds the committed relaxed instance
$\overline{I^{(i)}} = (u^{(i)},  \overline{T^{(i)}}, \overline{E^{(i)}})$
out of commitments sent from $\mathcal{P}$.  Using $\overline{I^{(i)}}$, $\mathcal{V}$ overwrites
$$ \overline{I^{cml}} \leftarrow \texttt{SingleFold}_{\mathcal{V}}(\overline{I^{cml}}, \overline{I^{(i)}}).$$
    \end{itemize}
    \item  After $N$ steps, the folding is complete.  All that remains is for $\mathcal{P}$ to convince $\mathcal{V}$ that the final folded tuple $I^{cml}$ is a legitimate witness to the committed instance $\overline{I^{cml}}$, which is done as in Halo2.
\end{enumerate}
    \end{protocol}
\end{tcolorbox}

\subsection{Custom Gates with Verifier Input}
\label{sec:cgvi}

The work in Section~\ref{sec:single-fold} is basically making explicit a sketch already outlined in Sangria \cite{sangria}. To achieve our main goal of allowing lookups, we need to generalize the context to one where the prover would need information from the verifier (usually in the form of random challenges) to compute certain parts of the trace in the middle of the computation (see RAPs \cite{AIR-to-RAPs}). More specifically, we assume that:
\begin{enumerate}
    \item The prover fills out the \emph{partial trace} $T_0$, a $n \times w_0$ table without additional information.
    \item The prover receives some information from the verifier $R$, which we assume is embedded as a vector in $F^{v}$ for some constant $v \in \mathbb{N}$. 
    \item With $S$, the prover is able to compute $w_1$ additional columns to create the \emph{full trace} $T$, a $n \times w$ table which can then be checked for validity against the structure $P$.
\end{enumerate}
We call this setup \emph{custom gates with verifier input}. The standard context where this workflow is used is lookups, our main goal. Thus, we first give the formalism and defer the example to Section~\ref{sec:lookups}.

\begin{tcolorbox}
    \begin{protocol}[Custom Gates with Verifier Input, Full Protocol]
\label{protocol:full-cgvi}
\hfill
\medskip
    
INPUT: [to $\mathcal{P}$]: $N$ partial instance-witness pairs / traces $\widetilde{I^{(i)}} = (T_0^{(i)})$, for $i \in \{1, \ldots, N\}$. 
\hfill
\medskip
OUTPUTS: [from $\mathcal{P}$]: 1 relaxed (full) instance-witness pair $I^{cml} = (T, u, E, R)$; [from 
 $\mathcal{V}$]: 1 folded relaxed committed instance $\overline{I^{cml}} = (\overline{T}, u, \overline{E}, R)$.

\begin{enumerate}
    \item To initialize folding, $\mathcal{P}$ initializes a ``cumulative lookup instance'' $I^{cml}$ where $T, u, E, R$ are all equal to zero. Similarly, $\mathcal{V}$ initializes a cumulative committed instance $\overline{I^{cml}}$ where $ \overline{T}, u, \overline{E}, R$ are all equal to zero.
    \item When we fold in a new partial trace $\widetilde{I^{(i)}} = (T_0^{(i)})$, 
    \begin{enumerate}
        \item $\mathcal{P}$ first commits to $\overline{T_0^{(i)}}$ for $i \in \{1,2\}$, sending the commitments to $\mathcal{V}$.
        \item $\mathcal{V}$ then gives $\mathcal{P}$ some information $R^{(i)}$.
        \item $\mathcal{P}$ uses $R^{(i)}$ to compute $T^{(i)}$, thus obtaining a full relaxed lookup instance $I^{(i)} = (u^{(i)}, R^{(i)}, T^{(i)},  E^{(i)})$ by setting $u^{(i)} = 1$, $E_j^{(i)} = (0, \ldots, 0)$ for all $j$, and $R^{(i)} = R$.
        \item $\mathcal{P}$ and $\mathcal{V}$ run Protocol~\ref{protocol:single-fold}.
    \end{enumerate}
    \item 
    \begin{itemize}
        \item $\mathcal{P}$ overwrites
$$ I^{cml} \leftarrow \texttt{SingleFold}_{\mathcal{P}}(I^{cml}, I^{(i)}),$$
that is: apply Protocol~\ref{protocol:single-fold}, using the current $I^{cml}$ as the first input and the new relaxed instance $I^{(i)}$ as the second input. During each step, we need a new $r = r^{(i)}$ for our ``folding randomness.'' 
\item $\mathcal{V}$ builds the committed relaxed instance
$\overline{I^{(i)}} = (\overline{T^{(i)}}, u^{(i)},  \overline{E^{(i)}}, R^{(i)} = R)$
out of commitments sent from $\mathcal{P}$. Using $\overline{I^{(i)}}$, $\mathcal{V}$ overwrites
$$ \overline{I^{cml}} \leftarrow \texttt{SingleFold}_{\mathcal{V}}(\overline{I^{cml}}, \overline{I^{(i)}}).$$
    \end{itemize}
    \item  After $N$ steps, the folding is complete. 
\end{enumerate}
    \end{protocol}
\end{tcolorbox}

We conclude with two remarks:
\begin{itemize}
    \item Note that $R$ does not need to be committed; it is, after all, sent by $\mathcal{V}$.
    \item This protocol can easily be generalized to the framework where we have several rounds of verifier input. A single step (2) of Protocol~\ref{protocol:full-cgvi} then involves building up from $T_0^{(i)}$ to $T = T_k^{(i)}$ after $k$ steps, each step making the partial commitments to $T_i$ right before asking for a new verifier input $S_i$. In our work, we do not use this full generality because of the notational burden and the fact that lookups only require a single round of verifier input.
\end{itemize}

\section{\textbf{Origami}: Folding Lookups}
\label{sec:lookups}

We now come to the main application. In this section, we describe an explicit folding scheme for a Halo2 lookup argument as a special case of our setup of custom gates with verifier randomness. 

\subsection{Lookups}
\label{sec:lookups-prelim}

Given two vectors of numbers (actually, elements of some large finite field $\mathbb{F}$) $A = (a_1, \ldots, a_m)$ and $S = (s_1, \ldots, s_n)$, a \emph{lookup argument} is an argument that each\footnote{$A$ does not have to contain every element of $S$. Also, $A$ can contain duplicates (as can $S$). But every element of $A$ has to appear somewhere in $S$.} $a_i$ is equal to some $s_j$. One common use case is evaluating a function by lookup table.  In this case, the vector $S$ encodes the values of some function, while $A$ encodes some claimed evaluations of the function at specific points. 

There have been several different implementations of lookup arguments, starting with Plookup \cite{plookup}. The lookup we are trying to fold is Halo2 lookups \cite[4.1]{halo2}, which we summarize as follows:
\begin{enumerate}
    \item  $\mathcal{P}$ finds a permutation $A'^{}$ of $A^{}$ and a permutation $S'^{}$ of $S^{}$ such that, for each $j$, either $a'^{}_j = a'^{}_{j-1}$ or $a'^{}_j = s'^{}_j$.
    \item $\mathcal{P}$ and $\mathcal{V}$ carry out a \emph{grand product protocol} to prove that $A'^{}$ is a permutation of $A^{}$ and $S'^{}$ is a permutation of $S^{}$.  
    \begin{enumerate}
        \item First, $\mathcal{V}$ sends $\mathcal{P}$ random challenges $\beta^{}$ and $\gamma^{}$.
        \item Based on those random challenges, $\mathcal{P}$ creates two new vectors $W^{}$ and $Z^{}$, the ``grand product vectors''.
    \end{enumerate}
\item In order to verify the lookup, $\mathcal{V}$ checks a list $P$ of polynomial identities involving the vectors $\{A^{}, A'^{}, S^{}, S', W^{}, Z^{}$\}, the random challenge scalars $\beta^{}$ and $\gamma^{}$, and constant and public $Q^*$ vectors (used for zero-knowledge): \begin{enumerate}
    \item $(1 - Q^{blind} - Q^{last}) \cdot \left (  Z[-1] (A' + \beta) - Z (A + \beta) \right ) = 0$
\item $(1 - Q^{blind} - Q^{last}) \cdot \left (  W[-1] (S' + \gamma) - W (S + \gamma) \right ) = 0$
\item $Q^{last} \cdot (Z^2 - Z) = 0$
\item $Q^{last} \cdot (W^2 - W) = 0$
\item $(1 - Q^{blind} - Q^{last}) (A' - S') (A' - A'[1]) = 0$
\item $Q^0 \cdot (A' - S') = 0$
\item $Q^0 \cdot (Z - 1) = 0$
\item $Q^0 \cdot (W - 1) = 0$.    
\end{enumerate}
\end{enumerate}
We call these equations $f_1 = 0$ through $f_8 = 0$ respectively. \begin{enumerate}
    \item Recall that each polynomial equation $f_i = 0$ corresponds to $n$ polynomial equations of the same form 
$$f_{i, j} (a_1, \ldots, a_n, a'_1, \ldots, a'_n, \ldots, w_1, \ldots, w_n, \beta, \gamma) = 0,$$
where each capital letter $X \in \{A, A', S, S', Z, W\}$ corresponds to substituting $x_j$ for $f_{i, j}$, $x_j$ being the $j$-th coordinate for the vector $X$. The coordinates of the $Q^*$ vectors, the $q^*_i$'s, do not appear as arguments since they are constants. As an example, the third equation $f_3 = Z^2 - Z = 0$ is shorthand for $n$ constraints of the form $f_{3, i}(\cdots) = z_i^2 - z_i = 0$ for each $i$, where $z_i$ is the $i$-th coordinate of $Z$. 
\item Recall that e.g.\ $Z[-1]$ means that when we unpack equation $f_{i, j}$, instead of substituting $z_{j}$, we substitute $z_{j - 1}$. As an example, $f_1 = 0 $ corresponds to $n$ constraints of the form 
$$f_{1, j} (\cdots) = (1 - q^{blind}_j - q^{last}_j)\left (  z_{j-1} (a'_j + \beta) - z_j (a_j + \beta) \right ) = 0.$$
\end{enumerate} 

The main point is that this setup fits within our framework of custom gates with verifier input. The structure $P$ consists of the $8$ polynomial identities $f_i$ mentioned in the last step, and the verifier input are the $\beta$ and $\gamma$ that $\mathcal{P}$ needs to generate $W$ and $Z$.

\begin{remark} [Zero-knowledge] For the scope of our presentation, we will not be concerned about zero-knowledge and the details of the $Q^*$ vectors. Readers interested in zero-knowledge should refer to   \cite[4.1.1]{halo2}. As a short summary:
\begin{itemize}
    \item Let $t=2$.  (In general, $t$ is the maximum number of distinct ``shifts'' that occur in any of the polynomial constraints.  Since we have $Z$ and $Z[-1]$ in the same constraint -- meaning the polynomial $f_{i, j}$ involves both $Z_j$ and $Z_{j-1}$ -- we take $t=2$ for this protocol.)
    \item To implement zero knowledge, only the first $n-t$ rows of the vectors $A, S, \ldots$ will participate in the lookup.  The last $t$ rows will be chosen by the prover $\mathcal{P}$ at random.
    \item $Q^0$: $q^0_0 = 1$ and $q^0_i = 1$ for all $i \neq 0$.  In \cite{halo2}, this $Q^0$ is denoted $\ell_0$ and called a \textit{Lagrange basis polynomial}.
    \item $Q^{blind}$: $q^{blind}_i = 0$ for $n-t \leq i \leq n-1$, and $1$ otherwise (i.e.\ $Q^{blind}$ selects the last $t$ rows).
\item  $Q^{last}$: $q^{last}_i = 1$ for $i = n-t-1$, and $0$ otherwise (i.e.\ $Q^{last}$ selects the $(n-t-1)$-st row, the last row before the ``blind'' rows).
\end{itemize}
\end{remark}

\subsection{Sketch and Example}

Before giving the specifics, we give an example of our procedure. Suppose we want to verify that $3, 7, 3, 5$ are odd numbers, while $6, 4, 4, 4$ are even. We may encode these lookups as $A^{(i)}$ and $S^{(i)}$, as in Table~\ref{tab:folding-example}.

\begin{table}[]
    \centering
    \begin{tabular}{|c|c||c|c|}
    \hline
 $A^{(1)}$ & $S^{(1)}$ & $A^{(2)}$ & $S^{(2)}$\\
\hline
3 & 1 & 6 & 2\\
\hline
7 & 3 & 4 & 4\\
\hline
3 & 5 & 4 & 6\\
\hline
5 & 7 & 4 & 8\\
\hline
    \end{tabular}
    \caption{Two lookups that we may want to fold together.}
    \label{tab:folding-example}.
\end{table}

First of all, the prover computes permutations $A'^{(1)}, S'^{(1)}, A'^{(2)}, S'^{(2)}$ such that each entry of $A'^{(1)}$ is equal, either to the previous entry of $A'^{(1)}$, or to the same entry of $S'^{(1)}$.  Similarly for $A'^{(2)}$ and $S'^{(2)}$. See Table~\ref{tab:folding-example-prime} for one particular way to do this for our example. We can check that:
\begin{itemize}
    \item $A'^{(1)} = (3, 3, 5, 7)$ is a permutation of $A^{(1)} = (3, 7, 3, 5)$, and similarly for the other three columns.
    \item Each entry of $A'^{(1)}$ is either the same as the entry of $S'^{(1)}$ next to it, or a repeat of the previous entry of $A'^{(1)}$.  The first entry $3$, the $5$ and the $7$ are all the same as the adjacent entries of $S'^{(1)}$, while the second $3$ is a repeat of the first.
    \item The same holds for $A'^{(2)}$.
\end{itemize}

\begin{table}[]
    \centering
    \begin{tabular}{|c|c||c|c|}
    \hline
 $A'^{(1)}$ & $S'^{(1)}$ & $A'^{(2)}$ & $S'^{(2)}$ \\ 
\hline
3 & 3 & 4 & 4\\
\hline
3 & 1 & 4 & 8\\
\hline
5 & 5 & 4 & 2\\
\hline
7 & 7 & 6 & 6\\
\hline
    \end{tabular}
    \caption{Example of creating the permuted versions of $A$ and $S$.}
    \label{tab:folding-example-prime}.
\end{table}

The combination of $A, S, A', S'$ (for either $i$) would be considered as our $T_0$ from Section~\ref{sec:cgvi}, the partial witness computable by $\mathcal{P}$ without further information. At this point, as in Protocol~\ref{protocol:full-cgvi}, the prover sends the verifier commitments to all eight vectors above. 

Now, $\mathcal{V}$ replies with random challenges $R^{(1)} = (\beta^{(1)}, \gamma^{(1)})$ and $R^{(2)} = (\beta^{(2)}, \gamma^{(2)}).$ $\mathcal{P}$ uses them to create grand products $Z^{(1)}, W^{(1)}, Z^{(2)}, W^{(2)}$. We defer the details of the algebra to later in this section; what's important to know is just that the lookup problem can be translated into checking some polynomial conditions $$f_i(\beta^{(1)}, \gamma^{(1)}, A^{(1)}, S^{(1)}, A'^{(1)}, S'^{(1)}, W^{(1)}, Z^{(1)}) = 0$$ and $$f_i(\beta^{(2)}, \gamma^{(2)}, A^{(2)}, S^{(2)}, A'^{(2)}, S'^{(2)}, W^{(2)}, Z^{(2)}) = 0.$$

So far we have two lookup arguments: one for the odd numbers, and one for the evens.  Now let's fold them together, as in Protocol~\ref{protocol:single-fold}. First, the prover will send the verifier commitments to some cross-terms $B_1, B_2, B_3, B_4$. 
Then the verifier sends the prover a random challenge $r$.  (In our example, let's imagine $r = 100$.) Finally, the prover computes the linear combinations $A^{(1)} + r A^{(2)}, S^{(1)} + r S^{(2)}$ and so forth.

\begin{table}[]
    \centering
    \begin{tabular}{|c|c|c|c|}
    \hline
$A_1 + r A^{(2)}$ & $S^{(1)} + r S^{(2)}$ & $A'^{(1)} + r A'^{(2)}$ & $S'^{(1)} + r S'^{(2)}$ \\ 
\hline
603 & 201 & 403 &403\\
\hline
407 & 403 & 403 & 801\\
\hline
403 & 605 & 405 & 205\\
\hline
405 & 807 & 607 & 607\\
\hline
    \end{tabular}
    \caption{The folded trace / witness of the two lookups. Notice that the result is no longer a lookup.}
    \label{tab:folding-example-folded}.
\end{table}

To see the folded example, see Table~\ref{tab:folding-example-folded}. Something strange happened here: The ``lookup'' and ``permutation'' relations are no longer satisfied.  The number $603$ appears in $A^{(1)} + r A^{(2)}$ but not in $S^{(1)} + r S^{(2)}$.  The vector $A'^{(1)} + r A'^{(2)}$ is not a permutation of $A^{(1)} + r A^{(2)}$.  It looks like we've thrown out all the nice properties that made the lookup argument work.

But in fact polynomials save the day!  The ``folded'' vectors will still satisfy a polynomial identity that we'll be able to write down, and that's what the prover will verify.

\subsection{The Procedure}

We now show how our protocol works in the case of lookups. First, we relax $f_i$ into the following homoegenous equations with $u$ and $E_i$:
\begin{enumerate}
\item $(1 - Q^{blind} - Q^{last}) \cdot \left (  Z([-1]) (A' + \beta) - Z (A + \beta) \right ) = E_1$
\item $(1 - Q^{blind} - Q^{last}) \cdot \left (  W([-1]) (S' + \gamma) - W (S + \gamma) \right ) = E_2$
\item $Q^{last} \cdot (Z^2 - u Z) = E_3$
\item $Q^{last} \cdot (W^2 - u W) = E_4$,
\item $(1 - Q^{blind} - Q^{last}) (A' - S') (A' - A'[1]) = E_5$,
\item $Q^0 \cdot (A' - S') = 0$
\item $Q^0 \cdot (Z - u) = 0$
\item $Q^0 \cdot (W - u) = 0$.
\end{enumerate}
For example, recall that $f_3$ used to encode $n$ constraints of form 
$$f_{3, i} = q_i^{last} (z_i^2 - z_i) = 0;$$
the new relaxed $f_3$ would instead encode $n$ constraints of the form
$$f_{3, i} = q_i^{last} (z_i^2 - z_iu) = e_{3, i}.$$

Note that each of the first $5$ quadratic equations give a slack term, and the linear equations do not. Applying Proposition~\ref{prop:delta} gives the corresponding cross-terms
\begin{eqnarray*} 
B_{1} & = & (1 - Q^{blind} - Q^{last}) \cdot \\
& & \big( (Z^{(1)}[-1] (A'^{(2)} + \beta^{(2)}) + Z^{(2)}[-1] (A'^{(1)} + \beta^{(1)}) \\
& & - Z^{(1)} (A^{(2)} + \beta^{(2)}) - Z^{(2)} (A^{(1)} + \beta^{(1)}) \big)  \\
B_{2} & = & (1 - Q^{blind} - Q^{last}) \cdot \\
& & \big( W^{(1)}[-1] (S'^{(2)} + \beta^{(2)}) + W^{(2)}[-1] (S'^{(1)} + \beta^{(1)}) \\
& & - W^{(1)} (S^{(2)} + \beta^{(2)}) - W^{(2)} (S^{(1)} + \beta^{(1)}) \big) \\
B_{3} & = & Q^{last} \cdot (2 Z^{(1)} Z^{(2)} - u^{(1)} Z^{(2)} - u^{(2)} Z^{(1)} \\
B_{4} & = & Q^{last} \cdot (2 W^{(1)} W^{(2)} - u^{(1)} W^{(2)} - u^{(2)} W^{(1)} \\
B_5 & = & (1 - Q^{blind} - Q^{last}) \cdot \\
& & (2 A'^{(1)} A'{(2)} - S'^{(1)} A'{(2)} - S'^{(2)} A'^{(1)} - A'^{(1)} A'{(2)}[1] - A'^{(2)} A'^{(1)}[1] + \\
& & S'^{(1)} A'^{(2)}[1] + S'^{(2)} A'{(1)}[1]).
\end{eqnarray*}
For example, $B_3$ corresponds to, as $j$ runs from $1$ to $n$, 
$$ B_{3, j}(a_{1, 1}, \ldots, z_{n, 1}, a_{1, 2}, \ldots, z_{n, 2}, u_1, u_2, \beta_1, \beta_2, \gamma_1, \gamma_2) = q^{last}_j (2z_{j, 1}z_{j, 2} - u_1 z_{j, 2} - u_2 z_{j, 1}).$$

Recall that the main idea here is that we really want the $f_i$ equations to satisfy constraints of the form
$$f_i(X_1 + rX_2) = f_i(X_1) + r f_i(X_2),$$
where $X_1$ and $X_2$ are shorthand meaning ``all the arguments,'' but for the quadratic $f_i$ we have to settle for 
$$ f_i(X_1 + rX_2) = f_i(X_1) + r^2 f_i(X_2) + r B_{i}(X_1, X_2).$$
To see this explicitly with $f_3 = Z^2 - uZ,$ we compute
\begin{align}
f_{3}(X_1 + rX_2) & = (Z_1 + rZ_2)^2  - (u_1 + ru_2) (Z_1 + rZ_2) \\
& = (Z_1^2 + 2r Z_1 Z_2 + r^2Z_2^2) - u_1Z_1 - ru_1Z_2 - ru_2Z_1 - r^2 u_2 Z_2 \\
& = (Z_1^2 - u_1Z_1) + r^2(Z_2^2 - u_2Z_2) + r(2Z_1Z_2 - u_1Z_2 - u_2Z_1) \\
& = f_{3}(X_1) + r^2 f_{3}(X_2) + r B_{3}(X_1, X_2),
\end{align}
as desired.


We can now simply follow Protocol~\ref{protocol:full-cgvi}, with $T^0, R, T$ defined as follows:
\begin{enumerate}
    \item First, $\mathcal{P}$ is given the instance $A$ and $S$. $\mathcal{P}$ then computes $A'$ and $S'$ without further help, so the $n \times 4$ partial trace $T_0$ would consist of the information in $(A, S, A', S')$, which can be done via e.g.\ concatenation.
    \item After committing to $T_0$ to obtain $\overline{T_0}$, $\mathcal{P}$ receives the verifier randomness $R = (\beta, \gamma)$.
    \item $\mathcal{P}$ is now able to compute the $n \times 6$ full trace $T$ by making $2$ more columns $Z$ and $W$.
\end{enumerate}

\section{Comparison with Other Protocols}

\subsection{Nova}

Nova \cite{nova} is a folding scheme for R1CS circuits.  The R1CS structure is a special case of an AIR, where there is a single constraint polynomial $f_1$, of degree 2, having a specific form.  Because the polynomial is of degree $d=2$, there is only a single cross-term $B = B_{1, 1}$.  The folding scheme described here generalizes Nova: if you apply this scheme to R1CS systems, you recover the original Nova folding scheme.

The idea that Nova-style folding can be generalized to arbitrary custom gates was introduced in \cite[Section 3.3]{sangria}.

\subsection{Sangria}

Sangria \cite{sangria} had outlined an approach to custom gates. Our lookup protocol is almost, but not quite, a special case of that procedure, mainly because of the roles of $\beta$ and $\gamma$. Instead, what we have here is a special case of ``custom gates with verifier randomness,'' a concept that's a slight generalization of Sangria's approach to custom gates. 

The argument for \emph{knowledge soundness} (a cheating prover cannot convince the verifier to accept a folded proof unless the prover actually knows $N$ satisfying witnesses) is similar to Sangria. We give a proof of knowledge soundness in Appendix~\ref{app:soundness}.

In outline, the proof (very similar to as done in Sangria and Nova) is as follows.  The idea is to imagine an \emph{extractor} that interacts with the prover.  The extractor is allowed to rewind the prover to a previous state.  In practice, this means the extractor (playing the role of verifier) can send the prover different challenges $r$, and see how the prover responds.  Like in Nova and Sangria, by testing enough different values $r$ and doing a bit of algebra, the extractor can recover the witnesses $N$ that were folded together. Once the extractor can recover the $N$ folded witnesses, since Halo2 lookups themselves are knowledge sound, we know the prover must know $N$ valid lookup witnesses.

Our proof of knowledge soundness in the general scheme is only a bit more complicated than that of Sangria, partly due to the polynomials of arbitrary degree (although again Sangria had an outline already in Section 3.3) and partly due to the fact that we have ``verifier randomness,'' as stated above. We have to take care that the verifier-provided randomness $\beta$ and $\gamma$ does not mess things up.  

\subsection{Related Work: Moon Moon and Protostar}

In work in progress (\cite{moonmoon} and personal communication), Lev Soukhanov has proposed Moon Moon, a more powerful approach to combine folding with verifier input.
In Moon Moon, suppose there are $N$ instance-witness pairs to be folded.
The prover first commits to all $N$ partial witnesses, 
then receives (once) a piece of verifier randomness $R$.
The prover then uses $R$ to compute the remaining columns of the full witness.

Moon Moon allows, for example, an extended permutation argument across $N$ execution traces.
The prover commits to all entries of both permutations,
spread across $N$ partial witnesses, 
then computes a single Fiat--Shamir hash
and uses it to construct a grand product polynomial, which proves the permutation constraint.

Protostar \cite{protostar} is a proving system that uses a folding scheme for higher-degree polynomial gates.  
Additionally, Protostar is a \emph{non-uniform} IVC:
it can prove a sequence of computations of the form 
\[ z_{i+1} = F_{k_i} (z_i), \]
where $F_1, \ldots, F_n$ are $n$ different computations encoded by $n$ different circuits.
This has applications to proving the output of a virtual machine which allows $n$ different operations.

\section{Acknowledgments}
We thank Nicolas Mohnblatt, Lev Soukhanov, Yi Sun, and Jonathan Wang for valuable discussions. 

\bibliographystyle{plain}
\bibliography{bibliography}

\appendix

\section{Knowledge Soundness}
\label{app:soundness}

In this section we will prove knowledge soundness of a noninteractive version of Protocol \ref{protocol:full-cgvi}, constructed by the Fiat--Shamir transformation.  We will work in the algebraic group model, and assume the commitment scheme used is KZG polynomial commitments.  We begin by recalling some preliminary concepts.

Intuitively, knowledge soundness is the claim that, if the prover convinces the verifier to accept a proof, then the prover ``knows'' a satisfying witness.

The claim is generally formalized as follows: a protocol is \emph{knowledge sound} if, for every (possibly dishonest) prover $\mathcal{P}$ and honest verifier $\mathcal{V}$, there exists an extractor $\mathcal{E}$, 
such that whenever $\mathcal{P}$ convinces $\mathcal{V}$ to accept a proof,
$\mathcal{E}$ can determine a satisfying witness (with all but negligible probability) by interacting with $\mathcal{P}$.
To make this precise, we need to say exactly what sorts of ``interaction'' are allowed.

We will work in the \emph{algebraic group model}.  That is, whenever the prover outputs an element $g$ of a group $G$ (e.g.\ an element of $\mathbb{F}$), the prover also outputs (but does not send to the verifier) a representation
\[ g = \sum_{i=1}^n a_i G_i \]
of $g$ in terms of previously-seen elements $G_i \in G$.  We write this representation as $[g]$.
The representation is not available to the verifier, but it is available to the extractor.

We will assume that the commitment scheme used is KZG polynomial commitments \cite{kzg}.

\begin{tcolorbox}
    We provide a brief summary of the KZG polynomial commitment scheme.

    The KZG commitment scheme allows a prover to commit to polynomials of degree up to some fixed $t \geq 0$.
    The scheme relies on an \emph{trusted setup}, which outputs $(g, {\alpha}g, {\alpha^2}g, \ldots, {\alpha^t}g)$, for some fixed
    $g \in G$ and $\alpha \in \mathbb{Z}$; the integer $\alpha$ itself is kept secret.

    To commit to a polynomial
    \[ p(T) = \sum_{i = 0}^{t} a_i T^i, \]
    the prover sends
    \[ \overline{p(T)} = \sum_{i=0}^t a_i ({\alpha^i} g). \]
    
\end{tcolorbox}

This commitment scheme has the property that, in the algebraic group model, 
if the prover sends a commitment to some $p(T)$, then the extractor has access to $p(T)$ (except with negligible probability).
Indeed, if the prover has honestly committed some $p(T)$, then the algebraic representation of the commitment in terms of the previously-seen group elements $\alpha^i g$ will take the form
\[ \overline{p(T)} = \sum_{i=0}^t a_i (\alpha^i g), \]
and the coefficients $a_i$ can be directly read off from this.  
For future reference, we let $[\mathrm{KZG}(p(T))]$ denote the algebraic representation
\[ \mathrm{KZG}(p(T)) = \sum_{i=0}^t a_i (\alpha^i g) \]
of the KZG commitment to the polynomial $p(T)$.

Protocol \ref{protocol:full-cgvi} is a public-coin interactive protocol between a prover $\mathcal{P}$ and a verifier $\mathcal{V}$.  The parties $\mathcal{P}$ and $\mathcal{V}$ send messages to each other in turn, in a total of $2r+1$ rounds, for some $r$: $\mathcal{P}$ sends some $a_1$, then $\mathcal{V}$ sends a challenge $c_1$, then $\mathcal{P}$ sends a message $a_2$, and so forth.  The verifier's messages are randomly chosen \emph{challenges}.
In this setting, the Fiat--Shamir transform replaces $\mathcal{V}$ with a cryptographic hash function:
whenever Protocol \ref{protocol:full-cgvi} calls for a random challenge from $\mathcal{V}$, 
the Fiat--Shamir prover $\mathcal{P}$ computes a cryptographic hash of the transcript of all values that have been sent so far, and uses that hash as the random challenge.

Using Fiat--Shamir gives a dishonest prover a new avenue of attack: 
the prover can rewind the verifier to get new challenges.
This makes proof of knowledge soundness more difficult.
We will use ideas introduced in \cite{gt}
to bound the probability that a dishonest prover
can produce an accepted transcript.

\cite[Theorem 2]{gt} relates knowledge soundness of a Fiat--Shamir protocol to \emph{state-restoration witness extended emulation} (sr-wee) soundness of the corresponding interactive protocol.  
For a formal definition of sr-wee soundness, see \cite[\S 4]{gt}.
Informally, the definition of sr-wee soundness is as follows:
for every prover $\mathcal{P}$, there exists an \emph{emulator} $\mathcal{E}$,
such that if $\mathcal{P}$ can produce an accepted transcript by repeated interaction with $\mathcal{E}$, 
then $\mathcal{E}$ can produce a satisfying witness with all but negligible probability.
Here $\mathcal{E}$, like the extractor above, 
interacts with $\mathcal{P}$ in the role of the verifier,
but is also given access to the algebraic representation $[g]$
of every group element sent by $\mathcal{P}$.
Meanwhile, $\mathcal{P}$ is allowed to rewind the emulator, 
sending different group elements to try to get a favorable challenge back from $\mathcal{E}$.
By \cite[Theorem 2]{gt}, if an interactive protocol satisfies sr-wee soundness, 
then the corresponding noninteractive protocol obtained by Fiat--Shamir is itself sound.

\cite[Theorem 1]{gt} provides a framework for proving sr-wee soundness.
Suppose we have the following:
\begin{itemize}
    \item For every partial transcript $\tau = ([a_1], c_1, \ldots, [a_i])$, a small set of \emph{bad challenges} $\mathrm{BadCh}(\tau)$, and
    \item An \emph{extractor function} $\mathrm{e}$ that takes as input an accepted transcript $([a_1], c_1, \ldots, [a_{r+1}])$ and outputs, with high probability, a valid witness $w$.
\end{itemize}
Then \cite[Theorem 1]{gt} shows that the protocol is sr-wee sound.  
Precisely, suppose that:
\begin{itemize}
    \item For every $\tau$, we have $\left | \mathrm{BadCh}(\tau) \right | \leq \epsilon \left | \mathrm{Ch} \right |$, where $\mathrm{Ch}$ is the set from which challenges are randomly drawn, and
    \item The probability in SRS that $\mathcal{P}$ can produce an accepted transcript $([a_1], c_1, \ldots, [a_{r+1}])$, with $c_i \not \in \mathrm{BadCh}([a_1], \ldots, [a_i])$ for every $i$, is at most $p_{fail}$.  (SRS \cite[\S3]{gt} means that $\mathcal{P}$ is allowed to rewind $\mathcal{V}$.)
\end{itemize}
Then the protocol has sr-wee soundness error of at most
\[ q\epsilon + p_{fail}, \]
where $\mathcal{P}$ is allowed to make at most $q$ queries to the (Fiat--Shamir) oracle.

\begin{thm}
    Suppose $\mathrm{IP}_0$ is an interactive protocol for some relation $\mathcal{R}$, of the following form.

    \begin{tcolorbox}
    The protocol $\mathrm{IP}_0$.
    
    \begin{itemize}
        \item $\mathcal{P}$ sends some $T_0$ to $\mathcal{V}$.
        \item $\mathcal{V}$ sends a random challenge $R$ to $\mathcal{P}$.
        \item $\mathcal{P}$ sends some $T_1$ to $\mathcal{V}$.
        \item Writing $T$ for the concatenation of $T_0$, $R$ and $T_1$, $\mathcal{V}$ verifies the polynomial constraints
        \[ f_j(T) = 0 \]
        for $j = 1, \ldots, \ell$.  If these constraints are satisfied, we say that $(T_0, R, T_1)$ is an \emph{accepted transcript} for $\mathrm{IP}_0$.
    \end{itemize}
    \end{tcolorbox}
Assume:
\begin{itemize}
    \item For each $T_0$, there is a set $\mathrm{BadCh}_0(T_0)$ of \emph{bad challenges}, such that $\left | \mathrm{BadCh}_0(T_0) \right | \leq \epsilon \left | \mathrm{Ch} \right |$, 
    \item The degree $d_j$ of each constraint polynomial $f_j$ satisfies $d_j \leq \epsilon \left | \mathrm{Ch} \right |$, and
    \item There is an extractor function $\mathrm{e}$ such that the probability in SRS that a cheating prover $\mathcal{P}$ can create an accepted transcript $(T_0, R, T_1)$, such that $R \not \in \mathrm{BadCh}_0(T_0)$ but $\mathrm{e}_0 ([T_0], R, [T_1])$ is not an accepted transcript, is at most $p_{fail}$.
\end{itemize}

Let $\mathrm{IP}$ be Protocol \ref{protocol:full-cgvi}, with the following two modifications.

In place of step 2 (a-c), $\mathcal{P}$ and $\mathcal{V}$ carry out $\mathrm{IP}_0$ (where $T$ is the concatenation $T_0 T_1$).

At the end of the protocol, $\mathcal{P}$ reveals $T^{cml}$ and $E^{cml}$.  Then $\mathcal{V}$ accepts the transcript if the KZG commitments to $T^{cml}$ and $E^{cml}$ agree with the committed values $\overline{T^{cml}}$ and $\overline{E^{cml}}$ already calculated, and the polynomial constraints
\[ f_j^{homog}(u^{cml}, T^{cml}) = E^{cml} \]
are all satisfied.

Then $\mathrm{IP}$ is sr-wee sound for the relation
\[ \mathcal{R}^{cml} = \left \{ (x_1, w_1, x_2, w_2, \ldots, x_N, w_N) | (x_i, w_i) \in \mathcal{R} \text{ for all $1 \leq i \leq N$} \right \}, \]
with soundness error at most $(2N+1) \epsilon + p_{fail} + \operatorname{Negl}(\lambda)$.
\end{thm}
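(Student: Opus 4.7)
The plan is to apply \cite[Theorem 1]{gt}: I need to exhibit, for each partial transcript $\tau$, a set $\mathrm{BadCh}(\tau)$ of bad challenges of size at most $\epsilon |\mathrm{Ch}|$, together with an extractor function $\mathrm{e}$ that converts any accepted transcript avoiding those bad challenges into a valid witness tuple for $\mathcal{R}^{cml}$, with failure probability at most $p_{fail} + \operatorname{Negl}(\lambda)$. Protocol $\mathrm{IP}$ has $2N$ rounds of verifier randomness (one $R^{(i)}$ and one folding challenge $r^{(i)}$ per iteration) plus a final KZG-opening verification, accounting for the $2N+1$ copies of $\epsilon$ in the bound. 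The $\operatorname{Negl}(\lambda)$ term will come from KZG binding in the algebraic group model, which lets the extractor read off the polynomial representations of every group element $\mathcal{P}$ sends.

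The central ingredient is the construction of $\mathrm{e}$, which proceeds by backward induction through the folding sequence. After $\mathcal{P}$ reveals $T^{cml}$ and $E^{cml}$, the extractor verifies $f_j^{homog}(T^{cml}, u^{cml}) = E_j^{cml}$ using the deterministically-computed $u^{cml}$. For the $i$th fold, the committed quantities must satisfy
\[ T^{cml, i} = T^{cml, i-1} + r^{(i)} T^{(i)}, \quad u^{cml, i} = u^{cml, i-1} + r^{(i)}, \]
\[ E^{cml, i}_j = E^{cml, i-1}_j + (r^{(i)})^{d_j} E^{(i)}_j + \sum_{k=1}^{d_j-1} (r^{(i)})^k B_{j, k}, \]
where the algebraic representations of every vector on the right-hand side were fixed by $\mathcal{P}$'s commitments before $r^{(i)}$ was drawn. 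By rewinding the $i$th folding round to obtain $d+1$ distinct values of $r^{(i)}$ (with $d = \max_j d_j$), the emulator solves the resulting Vandermonde system via Lagrange interpolation and recovers $T^{cml, i-1}$, $T^{(i)}$, $E^{cml, i-1}$, $E^{(i)}$, and all cross-terms. Iterating strips off one fold at a time; at each step, applying $\mathrm{e}_0$ to the reconstructed $(T_0^{(i)}, R^{(i)}, T_1^{(i)})$ yields a valid $\mathcal{R}$-witness.

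The bad-challenge set at the $r^{(i)}$-round consists of those $r^{(i)}$ where a cheating prover's folded data happen to satisfy $f_j^{homog}(T^{cml, i}, u^{cml, i}) = E_j^{cml, i}$ but the unfolded pieces do not: by Proposition~\ref{prop:delta} and uniqueness of polynomial interpolation, any such $r^{(i)}$ is a root of a nonzero polynomial of degree at most $d$, so $|\mathrm{BadCh}(\tau)| \leq d \leq \epsilon |\mathrm{Ch}|$ by hypothesis. At each $R^{(i)}$-round the hypothesis on $\mathrm{IP}_0$ supplies $\mathrm{BadCh}_0(T_0^{(i)})$ of size $\leq \epsilon |\mathrm{Ch}|$ and the extractor $\mathrm{e}_0$ with failure probability $\leq p_{fail}$. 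Combining through \cite[Theorem 1]{gt} yields the stated $(2N+1)\epsilon + p_{fail} + \operatorname{Negl}(\lambda)$ bound.

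The main obstacle will be handling the verifier randomness $R^{(i)}$ alongside the folding challenges without cross-contamination: a dishonest prover could try to choose commitments whose algebraic representations depend adversarially on $R^{(i)}$ to spoil the interpolation argument, and could similarly try to smuggle in non-honest $u^{(i)} \neq 1$ or $E^{(i)} \neq 0$. This is resolved by observing that $\mathcal{P}$ must commit to $T_0^{(i)}$ before seeing $R^{(i)}$, and to all subsequent objects (including $\overline{T_1^{(i)}}$, $\overline{E^{(i)}}$, and the cross-terms $\overline{B_{j, k}}$) before seeing $r^{(i)}$, so the polynomial identity in $r^{(i)}$ has coefficients fully determined by the transcript prior to the $r^{(i)}$ round; moreover $\mathcal{V}$ computes $u^{cml, i} = u^{cml, i-1} + r^{(i)}$ deterministically, which the interpolation forces to the honest value $u^{(i)} = 1$, and an analogous argument using the initialization $E^{cml, 0} = 0$ forces $E^{(i)} = 0$.
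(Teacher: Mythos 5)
Your high-level skeleton (route the claim through \cite[Theorem 1]{gt}, budget $2N$ challenge rounds plus the final check for the $(2N+1)\epsilon$ term, reuse $\mathrm{BadCh}_0$ and $\mathrm{e}_0$ at the $R^{(i)}$ rounds, and bound the folding-round bad challenges by the degree $d_j$ via Proposition~\ref{prop:delta}) agrees with the paper, but your extractor has a genuine gap. You have the emulator \emph{rewind} the $i$-th folding round, collect $d+1$ accepted responses for distinct $r^{(i)}$, and recover $T^{(i)}, E^{(i)}, B_{j,k}$ by solving a Vandermonde system. That is a forking/special-soundness extraction, and it does not fit the framework you (and the paper) invoke: in \cite[Theorem 1]{gt} the extractor function $\mathrm{e}$ takes a \emph{single} accepted transcript, together with the algebraic representations $[a_i]$, and in the state-restoration game it is the \emph{prover} who decides when and where to rewind the verifier; the emulator cannot compel the prover to answer $d+1$ distinct folding challenges at the same node, so there is nothing for your interpolation to interpolate. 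This mismatch is precisely why the theorem is stated in the algebraic group model with KZG commitments: the paper's extractor is straight-line. It reads $T^{(i)}$ (and $E^{(i)}$, $B_{j,k}$) directly off the AGM representation whenever that representation has KZG form, feeds $T^{(i)}$ to $\mathrm{e}_0$, and outputs the tuple of resulting witnesses; no rewinding by the extractor occurs anywhere.

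The interpolation-style reasoning survives only in the definition of the bad-challenge sets, and there your case analysis is incomplete. The paper distinguishes three cases at a folding round: if the transcript is good, $\mathrm{BadCh}(\tau)=\emptyset$; if some commitment's representation is \emph{not} of KZG form, $\mathrm{BadCh}(\tau)$ is the set of $r$ for which the folded representations happen to acquire KZG form, which has size at most $1$ by linearity of the commitment; and if the representations are KZG-form but Equation~\ref{identity_r} fails as a polynomial identity in $r$, then $\mathrm{BadCh}(\tau)$ is its root set, of size at most $d_j$. Your proposal covers only the third case. You also need the final step of the argument: acceptance forces the revealed $T^{cml}, E^{cml}$ to match the commitments, and hardness of discrete log makes the AGM representation of $\overline{T^{cml}}$ KZG-form except with probability $\operatorname{Negl}(\lambda)$; a reverse induction over the folds (using that no bad challenges occurred) then shows each extracted $T^{(i)}$ satisfies $f_j^{homog}(u^{(i)}, T^{(i)}) = E^{(i)}$, hence is an accepted $\mathrm{IP}_0$ transcript on which $\mathrm{e}_0$ fails with probability at most $p_{fail}$. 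Finally, your closing claim that interpolation ``forces $u^{(i)}=1$ and $E^{(i)}=0$'' is neither available nor needed: $u^{(i)}=1$ is fixed by the protocol, and the extraction argument only requires the relaxed constraints to hold for the committed $E^{(i)}$, not that $E^{(i)}$ vanish.
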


\begin{proof}
We will use \cite[Theorem 1]{gt}.  We need to define functions $\mathrm{BadCh}_i$ (one for each round of verifier challenge) and $\mathrm{e}$.

We need to define bad challenges $\mathrm{BadCh}$ for two types of verifier challenge: challenges $R^{(i)}$ in step 2 of Protocol \ref{protocol:full-cgvi}, which come from protocol $\mathrm{IP}_0$,
and folding challenges $r^{(i)}$ in step 3, which come from Protocol \ref{protocol:single-fold}.

For challenges $R^{(i)}$, we simply let $\mathrm{BadCh}(\overline{T_0^{(i)}})$ be the set of bad challenges for $\mathrm{IP}_0$ given by hypothesis.

For folding challenges $r^{(i)}$, we distinguish two cases.

Recall that the transcript $\tau$ of all messages sent before $\mathcal{V}$'s challenge $r^{(i)}$ contains the two committed instances to be folded $I^{cml} = (u^{cml}, \overline{T}^{cml}, \overline{E}^{cml})$ and $I^{(i)} = (u^{(i)}, \overline{T}^{(i)}, \overline{E}^{(i)})$, as well as $\mathcal{P}$'s commitments $\overline{B_{i, k}}$ to cross terms.

Since we are working in the AGM, the function $\mathrm{BadCh}$ also has access to representations of all prover output in terms of previously-seen group elements; these representations are written as $[\overline{T}^{cml}], [\overline{E}^{cml})],$ and so forth.

We say that $\tau$ is \emph{good} if:
\begin{enumerate}
    \item The algebraic representations of all commitments sent by $\mathcal{P}$ take the form of KZG commitments. 
    In other words, there exist $T^{cml}, E^{cml}, T^{(i)}, E^{(i)}, B_{i, k}$ such that
    \[ [\overline{T}^{cml}] = [\mathrm{KZG}(T^{cml})], \]
    and so forth, and
    \item The two instance-witness pairs $(u^{cml}, T^{cml}, E^{cml})$ and $(u^{(i)}, T^{(i)}, E^{(i)})$ are satisfying instance-witness pairs, with cross-terms $B_{i, k}$.

    In other words, we require that
    \begin{equation}
    \label{identity_r}
    f_j^{homog}(T^{cml} + r T^{(i)}) = E^{cml} + r^{d_i} E^{(i)} + \sum_{k=1}^{d_j - 1} r^k B_{j, k}, 
    \end{equation}
    identically as polynomials in $r$, for each polynomial relation $f_j$.
\end{enumerate}

If $\tau$ is \emph{good}, then we take $\mathrm{BadCh}(\tau)$ to be the empty set.

If $\tau$ fails to satisfy condition (1) above (i.e.\ the prover has not sent KZG commitments), we also take $\mathrm{BadCh}(\tau)$ to be the set of $r$ for which the algebraic representations $[\overline{T^{cml}}] + r [\overline{T^{(i)}}], [\overline{E^{cml}}] + r [\overline{E^{(i)}}], [\overline{B_{i, k}}]$
all have the form of Kate commitments.
By linearity of Kate commitment, $\left | \mathrm{BadCh}(\tau) \right | \leq 1$.

Otherwise, $\tau$ satisfies condition (1) but not condition (2): the prover has sent commitments to invalid witnesses.
In this case, choose some $j$ for which Equation \ref{identity_r} does not hold identically.  The equality can only hold for at most $d_j$ values of $r$; let $\mathrm{BadCh}(\tau)$ be the set of $r$ for which Equation \ref{identity_r} holds.

Next, we define the extractor function $\mathrm{e}$.
At each step $i$, if the algebraic representation of the prover's output $\overline{T^{(i)}}$ has the form of a KZG commitment
\[ [\overline{T^{(i)}}] = [\mathrm{KZG}(T^{(i)})], \]
then $\mathrm{e}$ applies $\mathrm{e}_0$ to the committed value $T^{(i)}$, returning a purported witness $w^{(i)} = \mathrm{e}_0(T^{(i)})$.
If not, $\mathrm{e}$ simply returns a null value $w^{(i)} = \emptyset$.
The final output of $\mathrm{e}$ is the tuple $(w^{(1)}, \ldots, w^{(n)})$.

Now we need show that the probability that $\mathcal{P}$ can produce some accepted transcript $\tau$, containing no bad challenges, and such that $\mathrm{e}(\tau)$ is not a valid witness, is negligible in the security parameter $\lambda$.

Since $\mathcal{V}$ accepts $\tau$, 
the final $\overline{(T^{cml})}$ is of the form $\overline{(T^{cml})} = \mathrm{KZG}(T^{cml})$ of a KZG commitment, and similarly for $\overline{(E^{cml})}$.
On the other hand, $\overline{(T^{cml})}$ comes with an algebraic representation $[\overline{(T^{cml})}]$ as a linear combination of the representations $[\overline{T^{(i)}}]$ in the AGM.
Since we have assumed the discrete logarithm problem is hard in $G$,
we have $[\overline{(T^{cml})}] = [\mathrm{KZG}(T^{cml})]$
except with probability negligible in the security parameter $\lambda$ 
(because if this equality did not hold, then $\mathcal{P}$ would have discovered
a nontrivial linear relation among elements of $G$).

So now suppose the algebraic representation $[\overline{(T^{cml})}]$
has the form of a KZG commitment.
By reverse induction we see that, since there are no bad challenges, 
all the folded values $[\overline{(T^{(i)})}]$ and $[\overline{(E^{(i)}}]$ are KZG commitments as well,
of the form $[\mathrm{KZG}(T^{(i)})]$ and $[\mathrm{KZG}(E^{(i)})]$, respectively.
Furthermore, at each step, $T^{cml}$ is replaced with $T^{cml} + r^{(i)} T^{(i)}$, and similarly for $E^{cml}$.

Again by reverse induction, since $\tau$ contains no bad challenges, we see that
\[ f_j^{homog}(u^{cml}, T^{cml}) = E^{cml} \]
at each step, and
\[ f_j^{homog}(u^{(i)}, T^{(i)}) = E^{(i)} \]
for every $i$.

Thus, the transcript $T^{(i)}$ is accepted for the protocol $\mathrm{IP}_0$.
By hypothesis, the probability that $\mathrm{e}_0(T^{(i)})$ is not a satisfying witness is negligible in $\lambda$.

\end{proof}

\end{document}